\documentclass[11pt]{article}

\usepackage[margin=1in]{geometry}
\usepackage[T1]{fontenc}
\usepackage[utf8]{inputenc}
\usepackage{amsmath}
\usepackage{amssymb}
\usepackage{amsfonts}
\usepackage{amsthm}
\usepackage{graphicx}
\usepackage{subcaption}
\usepackage[square,numbers,sort&compress]{natbib}
\usepackage{url}
\usepackage[hidelinks]{hyperref}

\newtheorem{definition}{Definition}
\newtheorem{proposition}{Proposition}
\newtheorem{corollary}{Corollary}
\title{Pricing IoT Data Delivered via LEO Satellites}

\author{Rahul Ramachandran\thanks{ORCID: 0000-0002-2668-1938}\\
University of Wisconsin--Madison\\
Madison, Wisconsin, USA\\
\texttt{rramachandr9@wisc.edu}
\and
Suman Banerjee\\
University of Wisconsin--Madison\\
Madison, Wisconsin, USA\\
\texttt{suman@cs.wisc.edu}}

\date{}

\begin{document}

\maketitle

\begin{abstract}
IoT terminals served by LEO satellite constellations transmit data to passing satellites in discrete uplink windows. That data is delivered to buyers only when the satellite reaches a ground station. This store-and-forward structure makes the achievable price a discontinuous function of the delivery completeness threshold (a phenomenon we call a \emph{pricing cliff}) and creates a geographic pricing asymmetry between polar and equatorial terminals that depends on constellation size. We present an integrated pricing model combining a full SGP4 orbital simulation of the deployed Kin\'{e}is constellation with a buffer flow model and a value decay framework with decay constants derived from operational decision-window timescales in the application literature and anchored to observed satellite imagery market prices. We show that polar and equatorial terminals are dominated by structurally different latency components requiring different infrastructure interventions, formally characterize the pricing cliff phenomenon and its commercial significance, and derive a geographic pricing premium that is exactly independent of the residual value assumption. Going from 1 to 25 satellites reduces the polar--equatorial premium by $25\times$, with diminishing returns beyond $n=10$; at full constellation size the premium falls below 1\% of peak data value and remains below 1.5\% across the full plausible range of buyer urgency.
\end{abstract}

\noindent\textbf{Keywords:} LEO satellite, IoT data markets, store-and-forward, contact window, pricing cliff, geographic pricing premium, data freshness, constellation size, SGP4 simulation

\section{Introduction}
LEO satellite constellations have extended IoT connectivity to locations beyond terrestrial reach --- ocean buoys, remote agricultural terminals, arctic pipeline monitors. Data from these terminals does not flow continuously: it accumulates onboard during the uplink pass and is delivered only when the satellite reaches a ground station (GS). This store-and-forward structure imposes fundamental discreteness on the delivery process, with consequences for buffer management and contract design.

The contact window structure is strongly latitude-dependent. Sun-synchronous orbit (SSO) ground tracks converge at high latitudes, so polar stations receive far more passes per day than equatorial stations from the same constellation. This geographic asymmetry grows with decreasing constellation size: a single satellite produces a far larger polar--equatorial latency gap than a dense constellation. Unlike bandwidth or spectrum, IoT data is an information good whose value decays continuously from the moment of capture, so this latency asymmetry directly determines the price that can be charged. The urgency of that decay varies widely across buyers: emergency responders need data within hours, logistics operators within half a day, agricultural planners within a week. A pricing model that ignores this heterogeneity leaves substantial revenue unrealized.

A further complication arises from the discreteness of store-and-forward delivery. Because data arrives in chunks at contact window times rather than continuously, the achievable price is a discontinuous function of the completeness threshold specified in a delivery contract. A buyer who requires 90\% of a dataset before acting may face a substantially lower price than one who requires 80\%, not because 80\% is intrinsically more valuable, but because the additional 10\% requires waiting for the next satellite pass. This pricing cliff structure is invisible to models that treat delivery quality as a continuous variable.

This paper presents an integrated data pricing model that combines an orbital propagation simulation, a store-and-forward flow model, and a data valuation framework grounded in observed satellite  market prices. We make four contributions. \emph{First}, we decompose end-to-end latency into uplink gap, store-and-forward gap, and downlink gap, showing that polar and equatorial terminals are dominated by structurally different components that respond to entirely different infrastructure investments. \emph{Second}, we formalize the \emph{pricing cliff} phenomenon: the achievable price $P^*(x_{\min})$ is a step function of the completeness threshold $x_{\min}$, with discontinuities at contact window boundaries. We characterize cliff depth as a function of inter-window gap, data volume, and buyer urgency. \emph{Third}, we derive the geographic pricing premium analytically and quantify its dependence on $n$ from 1 to 25 satellites. Going from 1 to 25 satellites reduces the Inuvik--Libreville premium by $25\times$, from \$15.58/km$^2$ to \$0.62/km$^2$ (0.99\% of $V_0$), with diminishing returns beyond $n=10$. \emph{Fourth}, we show analytically that the geographic premium is exactly independent of the residual value assumption ($c_{\text{ratio}}$ cancels in the difference) and remains below 1.5\% of $V_0$ across the full plausible range of buyer urgency.

The paper is structured as follows: Section~\ref{sec:related} reviews related work; Section~\ref{sec:valuation} presents the valuation framework; Section~\ref{sec:orbital} describes the orbital simulation; Section~\ref{sec:iot} presents the store-and-forward flow model and formalizes the pricing cliff; Section~\ref{sec:results} presents simulation results including the latency decomposition, pricing cliff characterization, and geographic premium analysis; Section~\ref{sec:limitations} discusses limitations.

\section{Related Work} \label{sec:related}
Pricing in IoT-over-LEO pipelines sits at the intersection of satellite network economics, freshness-aware control, and IoT data markets.

\textbf{Satellite Network Economics:} The foundational literature established prices as shadow values for scarce communication resources via congestion pricing, proportional fairness, and admission control~\cite{Kelly1998, Courcoubetis2003}. Satellite adaptations priced channel allocation, beam power, and spectrum in GEO and LEO systems~\cite{Sun2006, Fiaschetti2012, Li2018, Deng2019}, and recent work addresses dynamic pricing in LEO constellations~\cite{Li2021, Wang2025, Li2025}. Ground Station as a Service (GSaaS) and cloud-coupled scheduling have emerged as a distinct sub-field~\cite{Vasisht2021, Zhao2024, Eddy2024, Velusamy2022, Gu2025}. The limitation is that the priced object is always a communication resource, not a delivered data product: pricing downlink access does not answer what a buyer should pay for a soil moisture reading that is 4 hours old.

\textbf{Freshness-Aware Control:} Age of Information (AoI) metrics quantify how queuing, intermittent contact, and multi-hop forwarding degrade update usefulness in satellite IoT networks~\cite{Chiariotti2020, Soret2021, Jiao2023, Liao2024}. The limitation is that AoI is a technical metric, not an economic state variable: minimizing average AoI is not equivalent to maximizing revenue, and the step from an optimized AoI schedule to a buyer-facing price requires a demand model the AoI literature does not provide. Our buffer flow model is also adjacent to the DTN scheduling literature~\cite{Burleigh2003, Fraire2021}, which optimizes delivery probability under uncertain contact schedules but does not address the pricing consequences of the resulting discrete delivery structure.

\textbf{IoT Data Markets:} IoT data market models treat buyers as having heterogeneous preferences over quality and freshness~\cite{Niyato2016, Mao2019}. Contract theory has been applied to derive revenue-maximizing menus under buyer type uncertainty~\cite{Chen2023, He2025, Li2023b}. The limitation is that delivery is treated as under direct seller control; the physical constraints of satellite transport --- intermittent windows, store-and-forward queuing, and their geographic dependence --- are abstracted away. He et al.~\cite{He2025} and Chen et al.~\cite{Chen2023} use continuous quality variables and do not model the discreteness that produces pricing cliffs.

\textbf{Research Gap:} No existing paper derives the non-smooth price surface arising from completeness thresholds interacting with orbital geometry, connects a geographic pricing premium to buyer-type heterogeneity through a calibrated valuation framework, or decomposes end-to-end latency into commercially actionable components. He et al.~\cite{He2025} and Zhang et al.~\cite{Zhang2021a, Zhang2021b} treat delivery time as a continuous variable under the seller's control; in our pipeline it is externally fixed by orbital geometry, giving the price function its step structure. This paper addresses all three gaps within a unified framework calibrated to observable market prices and validated against a full SGP4 simulation.

\section{Data Valuation Framework} \label{sec:valuation}

\subsection{Utility Decay}

We model data's utility (value, $V$) as a decreasing function of the age of information. Data has maximum value at capture ($t=0$) and decays toward a residual floor $c$ due to archival utility:
\begin{equation}
\frac{dV}{dt} = -k(V - c) \quad \Rightarrow \quad V(t) = V_0\, e^{-kt} + c \label{data_val_eq}
\end{equation}
where $V_0$ is the peak value at capture, $c = c_{\text{ratio}} \cdot V_0$ is the residual floor, and $c_{\text{ratio}} \in (0,1)$ is the residual value fraction --- the proportion of peak value the data retains permanently as $t \to \infty$. The decay constant $k$ [hr$^{-1}$] encodes buyer urgency: large $k$ means value collapses quickly (e.g., emergency response), small $k$ means it decays slowly (e.g., agriculture). The exponential form is motivated by a Poisson substitution argument. If competing information sources (newer observations) arrive at a constant rate $k$, then the probability that this dataset remains the best available source decays exponentially in time.

\noindent\textbf{Calibration of Decay Constants:} We estimate $k = 1/\tau$ for three types of data buyers from operational decision-window timescales, $\tau$ in the application literature:
\begin{itemize}
    \item Delay-tolerant (e.g., agriculture): $\tau_{\text{tol}} = 168$\,hr (7 days)~\cite{Roy2021, Guo2023, Cao2019, Saleem2013}; $k_{\text{tol}} = 5.95 \times 10^{-3}$\,hr$^{-1}$.
    \item Moderate urgency (e.g., logistics): $\tau_{\text{mod}} = 12$\,hr~\cite{Eriksen2018, Tu2018}; $k_{\text{mod}} = 8.33 \times 10^{-2}$\,hr$^{-1}$.
    \item Delay-sensitive (e.g., emergency): $\tau_{\text{high}} = 4$\,hr~\cite{Ajmar2015, Wania2021, Voigt2016}; $k_{\text{high}} = 2.50 \times 10^{-1}$\,hr$^{-1}$.
\end{itemize}
The urgency spread $k_{\text{high}}/k_{\text{tol}} = 42$. For reference, $k$ values back-solved from satellite imagery archive and tasking prices (Maxar, Planet SkySat, Pl\'{e}iades Neo)~\cite{apollomapping, landinfo} via
\begin{equation}
k = -\frac{1}{t}\ln\!\left(\frac{p_A/p_T - c_{\text{ratio}}}{1 - c_{\text{ratio}}}\right) \label{eq_k_fit}
\end{equation}
are $10\text{--}77\times$ lower than the operational estimates, because the imagery archive market self-selects delay-tolerant buyers. Eq.~\ref{eq_k_fit} follows by equating the archive price $p_A = V(\Delta t)$ and the tasking price $p_T \approx V_0$ in Eq.~\ref{data_val_eq} and solving for $k$, where $\Delta t$ is the typical archive age.

\subsection{Achievable Price}
We model the achievable price for data as a combination of value decay with a value capture coefficient $\alpha \in [0,1]$ representing the seller's bargaining power:
\begin{equation}
P^*(t) = \alpha\,V(t) = \alpha V_0\!\left(e^{-kt} + c_{\text{ratio}}\right)
\label{eq_price}
\end{equation}
All figures present the \emph{Normalized} price $P^*(t)/V_0 = \alpha(e^{-kt} + c_{\text{ratio}})$, which is independent of $V_0$. Central estimates $\alpha = 0.75$ and $c_{\text{ratio}} = 0.15$ are used throughout. Note that $\alpha = 0.75$ is an upper bound: the model evaluates $P^*(b_i)$ at orbital-geometry-predicted delivery times, which are best-case lower bounds on actual latency~\cite{Caini2021, Fraire2021}. We set $P_{\text{floor}} = \$3$\,km$^{-2}$ as a reference minimum viable price~\cite{Osoro2021}. 

\textbf{Process Block and Fractional Download Model:} A generalized pipeline value model extends the closed-form $P^*(t)$ to multi-stage processing. We envision a process block which adds value as a sigmoid in processing time:
\begin{equation}
V_{PB}(t) = \frac{V_{PB}}{1 + e^{-k_{PB}(t - t_{PB})}}
\end{equation}
where $V_{PB}$ is the total value added by the block, $k_{PB}$ is the growth rate, and $t_{PB}$ is the processing time at which value addition peaks. Combining with value decay (Eq.~\ref{data_val_eq}) and extending to a store-and-forward pipeline that delivers fractions $x_i$ at successive contact window times $\{b_i\}$, each fraction contributes a scaled process block term:
\begin{equation}
\frac{dV}{dt} = \sum_{i=1}^{n} x_i \cdot
\frac{V_{PB}\, k_{PB}\, e^{-k_{PB}(t - b_i)}}{\left(1 +
e^{-k_{PB}(t - b_i)}\right)^2} - k(V - c)
\label{ODE_frac}
\end{equation}
where $\sum_i x_i = 1$ and the $\{b_i\}$ are derived from the orbital simulation (Section~\ref{sec:orbital}). 

\textbf{Parameter Sensitivity:} Taking partial derivatives of the normalized achievable price $P^*(t)/V_0 = \alpha(e^{-kt} + c_{\text{ratio}})$ reveals two properties that hold throughout the paper. First, $\alpha$ scales every normalized output linearly and multiplicatively. Second, the \emph{geographic pricing premium} --- the difference in achievable normalized price between the polar GS (Inuvik) and the equatorial station (Libreville) at the same constellation size ---
\begin{equation}
\Delta P^*/V_0 = \alpha\!\left[e^{-k b_{1,\rm in}} - e^{-k b_{1,\rm lb}}\right]
\label{eq:geo_premium_sens}
\end{equation}
is independent of $c_{\rm ratio}$. Here $b_{1,\rm in}$ and $b_{1,\rm lb}$ are the first-contact latencies at the Inuvik GS and Libreville GS respectively (defined in Section~\ref{sec:orbital}). Numerical sensitivity to $\tau_{\text{high}}$ and $b_1$ is quantified in Section~\ref{sec:results_sensitivity}.

\section{Orbital Simulation} \label{sec:orbital}

Satellite positions in the orbital simulation are computed using the SGP4 propagator~\cite{sgp4lib}, including atmospheric drag (BSTAR) and J2--J4 gravitational harmonics, with True Equator, Mean Equinox (TEME) positions rotated to Earth-Centered, Earth-Fixed (ECEF) using Earth's standard rotation rate. GS positions assume a spherical Earth model.

\textbf{Contact Windows:} The elevation angle from an observer on the ground to satellite is:
\begin{equation}
\varepsilon = \arcsin\!\left(\frac{(\mathbf{r}_{\text{sat}} - \mathbf{r}_{\text{obs}}) \cdot \hat{\mathbf{r}}_{\text{obs}}}{|\mathbf{r}_{\text{sat}} - \mathbf{r}_{\text{obs}}|}\right)
\end{equation}
\noindent where $\mathbf{r}_{\text{sat}}$ and $\mathbf{r}_{\text{obs}}$ are ECEF position vectors [m] and $\hat{\mathbf{r}}_{\text{obs}}$ is the unit local vertical at the observer. The same function is used for both GS downlink windows and IoT terminal uplink windows. Contact windows are contiguous periods with $\varepsilon > 5^\circ$; windows shorter than 60\,s are discarded. The simulation runs 48\,hours at 5-second resolution. Two-line element sets (TLEs) for all 25 Kin\'{e}is satellites are fetched from CelesTrak (April 2026 epoch) and cached for reproducibility. The deployed constellation (25 satellites, 650\,km SSO, 97.9--98.1$^\circ$ inclination) has all five orbital planes concentrated within a 143$^\circ$ RAAN (Right Ascension of the Ascending Node) arc, with a 217$^\circ$ gap --- reflecting partial deployment status relative to the final 36$^\circ$-spaced Walker Star design~\cite{kineis_fcc, kineis_se40}. Despite this, contact gap statistics remain sub-hour at $n=25$ for the five example GS modeled in this study: mean gaps of 5.4\,min at Inuvik and 12.0\,min at Libreville, with 95th-percentile gaps of 14.4\,min and 19.8\,min respectively, and an observed maximum of 55\,min at Easter Island over the 48-hour simulation window.

\textbf{Ground Station Network:} Five stations from the Kin\'{e}is GRS network~\cite{kineis_newsletters} are modeled: Inuvik, Canada (68.4$^\circ$N, polar hub); Toulouse, France (43.6$^\circ$N); Libreville, Gabon (0.4$^\circ$N, equatorial); Alice Springs, Australia (23.7$^\circ$S); Easter Island, Chile (27.1$^\circ$S). These span a latitudinal range of 68.4$^\circ$N to 27.1$^\circ$S and cover all major longitude sectors, providing a representative sample of the 20-station real-world network~\cite{kineis_faq}.

The first-contact latency $b_1$ is estimated as $b_1 = \bar{g}/2$, where $\bar{g}$ is the mean inter-window gap. This follows from uniform distribution of capture time within the gap: $\mathbb{E}[W] = \bar{g}/2$. Sensitivity of results to this assumption is quantified in Section~\ref{sec:results_sensitivity}.

Table~\ref{tab:el_sensitivity} shows $b_1$ at two elevation masks. The 5$^\circ$ baseline (Inuvik: 2.7\,min, Libreville: 5.9\,min) is used throughout. The empirically-motivated 20$^\circ$ mask~\cite{Chai2025} gives an upper bound on first-contact latency (Inuvik: 4.9\,min, Libreville: 13.4\,min); its effect on the geographic premium is quantified in Section~\ref{sec:results_sensitivity}.

\begin{table}[ht]
\centering
\caption{First-contact latency $b_1$ at two elevation mask thresholds and resulting geographic premium. The 5$^\circ$ mask gives theoretical lower bounds; the 20$^\circ$ mask is motivated by Chai et al.~\cite{Chai2025}. Emergency buyer, $n=25$, $V_0 = \$62.50$/km$^2$.}
\label{tab:el_sensitivity}
\begin{tabular}{lrrrrr}
\hline
Station & Lat. & Lon. & $b_1$ (5$^\circ$) & $b_1$ (20$^\circ$) \\
\hline
Inuvik, CAN        & 68.4$^\circ$N & 133.7$^\circ$W & 2.7\,min &  4.9\,min \\
Toulouse, FRA      & 43.6$^\circ$N &   1.4$^\circ$E & 4.7\,min &  9.6\,min \\
Libreville, GAB    &  0.4$^\circ$N &   9.4$^\circ$E & 5.9\,min & 13.4\,min \\
Alice Springs, AUS & 23.7$^\circ$S & 133.9$^\circ$E & 5.9\,min & 12.5\,min \\
Easter Island, CHL & 27.1$^\circ$S & 109.4$^\circ$W & 5.6\,min & 11.9\,min \\
\hline
\multicolumn{5}{l}{Geographic premium (Inuvik vs Libreville):} \\
\quad 5$^\circ$  & & & \$0.62/km$^2$ & (1.0\% of $V_0$) \\
\quad 20$^\circ$ & & & \$1.60/km$^2$ & (2.6\% of $V_0$) \\
\hline
\end{tabular}
\end{table}

\textbf{IoT Terminal Locations:} Six IoT terminal locations are modeled to span the range of orbital coverage conditions relevant to real deployments (Table~\ref{tab:iot_terminals}).

\begin{table}[ht]
\centering
\caption{IoT terminal locations.}
\label{tab:iot_terminals}
\begin{tabular}{lrrr}
\hline
Terminal & Lat. & Lon. & Type \\
\hline
Arctic Buoy        &  75.0$^\circ$N &   0.0$^\circ$E & Ocean, high lat. \\
North Sea Platform &  57.0$^\circ$N &   2.0$^\circ$E & Ocean, mid lat. \\
Sahel, NER  &  13.0$^\circ$N &   2.0$^\circ$E &  Land, low-lat. \\
Mid-Atlantic Buoy  &   0.0$^\circ$  &  30.0$^\circ$W & Ocean, equatorial \\
Amazon, BRA  &   3.0$^\circ$S &  60.0$^\circ$W & Land, equatorial \\
Antarctic Buoy     &  72.0$^\circ$S &   0.0$^\circ$E & Ocean, high lat.\ \\
\hline
\end{tabular}
\end{table}

\section{IoT Store-and-Forward Flow Model} \label{sec:iot}

\textbf{Store-and-Forward Latency:} End-to-end latency in a store-and-forward IoT pipeline can be written as:
\begin{equation}
\begin{split}
b_{\text{total}} = \underbrace{(t_{\text{uplink,end}} - t_{\text{capture}})}_{\text{uplink gap}} + \underbrace{(t_{\text{downlink,start}} - t_{\text{uplink,end}})}_{\text{store-and-forward gap}} \\ + \underbrace{(t_{\text{downlink,complete}} - t_{\text{downlink,start}})}_{\text{downlink gap}}
\end{split}
\end{equation}
The downlink gap is negligible~\cite{Tao2023, Zhou2024end}. The IoT uplink rate is 400\,bps (ARGOS-2~\cite{kineis_fcc}); the GS downlink is $\sim$1.25\,Mbps (S-band QPSK), yielding a $\sim$3{,}000:1 asymmetry. The dominant latency components are therefore uplink gap and store-and-forward gap.

\textbf{Buffer Flow Model:} A buffer flow model distributes queued data volume $D$ across sequential contact windows. At window $i$ with duration $\Delta t_i$ and uplink rate $r$, the transferred volume is:
\begin{equation}
d_i = \min\!\left(D_{\text{remaining}},\; r \cdot \Delta t_i\right), \quad
D = r_{\text{terminal}} \cdot \bar{g}_{\text{uplink}}
\end{equation}
Passes from different satellites frequently overlap: at the Arctic Buoy 81\% of uplink windows open before the previous window closes, against 28\%
at the Sahel. Two physical constraints follow. First, the terminal has a single transmitter, so transmissions serialize --- window $i$ becomes usable only from $\tilde{t}_{\text{start},i} = \max(t_{\text{start},i},\, b_{i-1})$, and transfer completes at $b_i = \tilde{t}_{\text{start},i} + d_i / r$. Treating overlapping windows as additive capacity would instead inflate usable uplink time by $2.2\times$ at polar terminals against $1.1\times$ at equatorial ones, systematically exaggerating the polar advantage. Second, an ARGOS uplink transmission is a broadcast: every satellite in view receives it, so a chunk is delivered by whichever of its receivers reaches a GS first. Each window yields fraction $x_i = d_i/D$, and the $\{x_i, b_i\}$ pairs parametrize the completeness gate (Eq.~\ref{eq_teff}). Real DTN single-route reuse policies may increase the store-and-forward gap beyond the orbital prediction~\cite{Caini2021}. Three pipeline-load classes are defined to bracket the ARGOS-2 uplink capacity and expose qualitatively distinct delivery behaviors:
\begin{itemize}
    \item Low ($\approx$28\,bps, $0.07\times$ uplink rate): sub-capacity regime; data clears in a single contact window at all locations.
    \item Mid ($\approx$2.8\,kbps, $7\times$ uplink rate): multi-window accumulation; pricing cliffs are visible.
    \item High ($\approx$27.8\,kbps, $69\times$ uplink rate): demand-saturating stress test, included to identify the capacity floor and the regime in which $x_{\min} = 1.0$ becomes unachievable at the worst-served locations. This class is not an operational scenario at the ARGOS-2 air interface and would require ARGOS-4 (4800\,bps~\cite{kineis_fcc}) or LoRa. (Note: high-class results are conservative upper bounds on achievable completeness. Our model overestimates delivery performance because it assumes a single terminal with no collision losses~\cite{Fraire2022}.)
\end{itemize}

\textbf{Completeness Threshold and Effective Delivery Time:} Many IoT applications require a minimum fraction $x_{\min}$ of the dataset before any processing or pricing can occur. For example, an irrigation decision needs readings from enough IoT terminals across the field, a weather model needs sufficient spatial coverage, and an anomaly detection pipeline needs enough observations to establish a baseline. The effective delivery time is:
\begin{equation}
t_{\text{effective}}(x_{\min}) = \min\left\{b_i \;:\; \textstyle\sum_{j \leq i} x_j \geq x_{\min}\right\} \label{eq_teff}
\end{equation}
Value decay runs from $t=0$, so $t_{\text{effective}}$ is the total elapsed time before the dataset is actionable. The achievable price is zero for $t < t_{\text{effective}}$ and follows Eq.\,\ref{eq_price} thereafter. Because $t_{\text{effective}}$ depends on where the capture instant falls relative to the pass schedule, a single capture time is a lottery: for the Emergency buyer this alone moves $P^*/V_0$ by up to $0.29$ at polar terminals. All reported prices are therefore expectations over 24 capture phases spaced 15\,min apart, and latency components are reported at the median phase so that the decomposition remains one consistent scenario. Rather than fixing $x_{\min}$ per buyer archetype, our model treats $x_{\min} \in \{0.5, 0.6, 0.7, 0.8, 0.9, 1.0\}$ as a sensitivity axis. The concept of a minimum completeness threshold is grounded in similar concepts of fitness for use and decision sufficiency~\cite{Lee2018, Hostache2018, Frank2008, Poccas2014, Dasgupta2021}. The specific threshold values appropriate for each IoT buyer archetype remain an open empirical question. Note that calibration from domain-specific datasets would allow $x_{\min}$ to be fixed per buyer type rather than treated as a sensitivity axis.

\subsection{Pricing Cliff Characterization} \label{sec:cliff}

Because data is delivered in discrete chunks at contact window times rather than continuously, the achievable price $P^*(x_{\min})$ is a step function of the completeness threshold $x_{\min}$: it drops discontinuously each time $x_{\min}$ crosses a window boundary. A buyer requiring 90\% completeness may pay substantially less than one requiring 80\%, not because 80\% is more valuable, but because the extra 10\% requires waiting for the next satellite pass. We call each such drop a \emph{pricing cliff}.

\begin{definition}[Pricing Cliff]
Let $\{(x_j, b_j)\}_{j=1}^{n}$ be the ordered sequence of delivery chunks, where $x_j$ is the fraction delivered in window $j$ and $b_j$ is the corresponding delivery timestamp. A \emph{pricing cliff} occurs at threshold $\bar{x}$ if there exists an index $i$ such that:
\begin{equation}
\sum_{j=1}^{i} x_j < \bar{x} \leq \sum_{j=1}^{i+1} x_j
\label{eq:cliff_condition}
\end{equation}
At such $\bar{x}$, the effective delivery time $t_{\rm eff}(\bar{x}) = b_{i+1}$ is discontinuous: an infinitesimal decrease in $\bar{x}$ below $\sum_{j=1}^{i} x_j$ yields $t_{\rm eff} = b_i$, while any $\bar{x}$ satisfying Eq.~\ref{eq:cliff_condition} requires waiting until window $i+1$. There is no $\bar{x}$ that produces a delivery time between $b_i$ and $b_{i+1}$.
\end{definition}

\begin{proposition}[Cliff Depth]
At a pricing cliff occurring at window boundary $i$, the Normalized price drop is:
\begin{equation}
\delta_i = \frac{P^*(b_i) - P^*(b_{i+1})}{V_0}
         = \alpha \left[ e^{-k b_i} - e^{-k b_{i+1}} \right]
\label{eq:cliff_depth}
\end{equation}
\begin{proof}
Direct substitution of $P^*(t)/V_0 = \alpha(e^{-kt} + c_{\rm ratio})$ at $t = b_i$ and $t = b_{i+1}$; $c_{\rm ratio}$ cancels in the difference.
\end{proof}
\noindent Applying the mean-value theorem to Eq.~\ref{eq:cliff_depth} with $\Delta g_i = b_{i+1} - b_i$ gives the first-order approximation:
\begin{equation}
\delta_i \approx \alpha \cdot k \cdot \Delta g_i \cdot e^{-k b_i}
\label{eq:cliff_approx}
\end{equation}
Cliff depth is therefore proportional to the value capture coefficient $\alpha$, the buyer urgency $k$, and the inter-window gap $\Delta g_i$. It decays exponentially with delivery time $b_i$: cliffs that occur early (when value is still high) are deeper than cliffs that occur late. Note that $b_i$ is the delivery \emph{completion} time within window $i$, not the window start time. The approximation holds when $k \Delta g_i \ll 1$, which is satisfied for all three buyer archetypes at the gap sizes observed in this simulation.

\end{proposition}

\begin{corollary}[Commercial Significance Condition]
Let $\epsilon > 0$ be the minimum price difference that is commercially actionable for a given buyer. A cliff is commercially significant if
\begin{equation}
\Delta g_i > \frac{\epsilon}{\alpha \cdot k \cdot e^{-k b_i}}
\label{eq:cliff_significance}
\end{equation}
Three implications follow directly. First, for delay-tolerant buyers ($k \to 0$), cliff depth vanishes regardless of gap size: the Agriculture archetype ($k = 1/168$\,hr$^{-1}$) produces cliffs too shallow to be commercially actionable at any gap size observed in this simulation. Second, for urgency-sensitive buyers (large $k$), even short gaps produce significant cliffs: the Emergency archetype ($k = 1/4$\,hr$^{-1}$) at a polar terminal with $\Delta g \approx 0.05$\,hr produces $\delta \approx 0.75 \times 0.25 \times 0.05 \times e^{-0.0125} \approx 0.009$, approximately 1\% of $V_0$ per contact window boundary crossed. Third, cliffs deepen with data volume: high-rate terminals span more windows, so $x_{\min}$ crosses more boundaries as it increases, producing the staircase structure visible in Fig.~\ref{fig:fig7}.
\end{corollary}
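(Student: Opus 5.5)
The significance condition is a rearrangement of the first-order cliff depth in Eq.~\ref{eq:cliff_approx} from the Cliff Depth proposition. A cliff is commercially significant when $\delta_i > \epsilon$. Substituting $\delta_i \approx \alpha k \Delta g_i e^{-kb_i}$ and dividing by the strictly positive factor $\alpha k e^{-kb_i}$ (we need $\alpha>0$ and $k>0$; the case $\alpha=0$ or $k=0$ gives $\delta_i=0$ and no significant cliffs) gives Eq.~\ref{eq:cliff_significance} directly.

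The one point that needs care is replacing the exact depth by its linearization, and I expect this to be the main obstacle. My plan is to make the step rigorous through the mean-value theorem. There is some $\xi\in(b_i,b_{i+1})$ with
\[
\delta_i = \alpha k \Delta g_i e^{-k\xi}.
\]
Since $e^{-k\xi}\le e^{-kb_i}$, the exact depth is at most the approximation. Hence $\delta_i>\epsilon$ implies the stated inequality exactly, so the condition is necessary.

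Conversely, $e^{-k\xi}\ge e^{-kb_i}e^{-k\Delta g_i}$. So the inequality is also sufficient up to a factor $e^{-k\Delta g_i} = 1-O(k\Delta g_i)$, which is negligible under the standing assumption $k\Delta g_i\ll 1$. I will present the corollary as exact in the necessary direction and first-order accurate in the sufficient direction, which matches the paper's use of ``$\approx$''.

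For the three implications I would argue as follows.

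\emph{First}, as $k\to 0$ the right-hand side of Eq.~\ref{eq:cliff_significance} diverges like $\epsilon/(\alpha k)$. For Agriculture, with $k=1/168$ and $b_i$ and $\Delta g_i$ bounded by the simulated gaps (at most about $55$ min, i.e.\ under 1 hr), the bound $\delta_i\le \alpha k\Delta g_i < 0.75/168\approx 0.0045$ holds. This falls below any reasonable $\epsilon$.

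\emph{Second}, the Emergency example is direct arithmetic with $b_i=\Delta g=0.05$ hr.

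\emph{Third}, the cliffs deepen with data volume. As $D$ grows, each window carries a smaller fraction $x_j$, so more windows are needed. The partial sums $\sum_{j\le i}x_j$ then partition $[0.5,1]$ into more intervals, and sweeping $x_{\min}$ over $[0.5,1]$ crosses more boundaries. The cumulative price drop $\sum_i\delta_i$ therefore telescopes to $\alpha[e^{-kb_{\rm first}}-e^{-kb_{\rm last}}]$. This quantity grows with the total span $b_{\rm last}-b_{\rm first}$, which is increasing in $D$. That gives the staircase structure.
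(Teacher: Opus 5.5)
\textbf{Gap in the third implication.} Your derivation of the inequality matches the paper's: impose $\delta_i>\epsilon$ on the first-order depth and divide by $\alpha k e^{-kb_i}$. Keeping the mean-value point $\xi$, instead of replacing it by $b_i$ as the paper does, is a real improvement. Because $ke^{-kt}$ is decreasing, the linearization overestimates $\delta_i$. So the inequality is exactly \emph{necessary} and only first-order \emph{sufficient}, whereas the paper phrases it as ``significant if''. The Emergency arithmetic with $b_i=\Delta g=0.05$\,hr is also fine.

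The step that fails is your argument for the third implication. The telescoped drop is $\alpha[e^{-kb_{\rm first}}-e^{-kb_{\rm last}}]=\alpha e^{-kb_{\rm first}}\bigl(1-e^{-k(b_{\rm last}-b_{\rm first})}\bigr)$. It grows with the span only if $b_{\rm first}=t_{\rm eff}(0.5)$ is held fixed, but $b_{\rm first}$ itself increases with $D$. Assume a roughly uniform drain rate $\rho$, so that $b_{\rm first}\approx t_0+D/(2\rho)$ and $b_{\rm last}\approx t_0+D/\rho$. Then the drop is $\alpha e^{-kt_0}u(1-u)$ with $u=e^{-kD/(2\rho)}$, which increases with $D$ only while $u>1/2$ and decreases afterwards.

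The paper's own numbers land in the decreasing regime for the Emergency buyer. At the Sahel and Amazon, High-class delivery takes about 31\,hr, versus roughly 4\,hr for Mid class. The whole $x_{\min}\in[0.5,1]$ sweep therefore happens after $e^{-kt}$ has collapsed. Under uniform accrual, the High-class drop is only a few hundredths of $V_0$, roughly an order of magnitude below the Mid-class drop.

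Individual cliffs do not deepen either: $\delta_i\approx\alpha k\Delta g_i e^{-kb_i}$ shrinks as $b_i$ moves later, while $\Delta g_i$ is fixed by orbital geometry. What you can actually prove is the counting statement the paper relies on. The cumulative uplink capacity $C(t)$ before clearing does not depend on $D$, and the step locations are the $b_i$ with $C(b_i)/D\in[0.5,1)$. Their number therefore grows roughly linearly in $D$, which gives more steps. A monotone total drop should be claimed only while $k\,t_{\rm eff}(0.5)$ is small, e.g.\ for Agriculture.

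\textbf{Smaller point on the first implication.} Your bound $\delta_i\le\alpha k\Delta g_i$ is exact and correct. However, the 55\,min you use to cap $\Delta g_i$ is the maximum \emph{ground-station} contact gap (Easter Island). The relevant $\Delta g_i=b_{i+1}-b_i$ is set by uplink windows at the IoT terminal, and these are much sparser near the equator (168 windows in 48\,hr at the Mid-Atlantic Buoy). Suppose $\epsilon$ is small enough that the $0.009$ Emergency cliff counts as significant. Then the Agriculture claim needs $0.75\,\Delta g_i/168<0.009$, i.e.\ terminal gaps below about 2\,hr. That has to be checked against the terminal window lists, not the GS statistics.
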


Eq.~\ref{eq:cliff_significance} gives operators a concrete infrastructure design criterion: for a buyer with urgency $k$ and minimum actionable price difference $\epsilon$, any inter-window gap exceeding $\epsilon / (\alpha k e^{-kb})$ will produce a commercially significant cliff. Reducing gap duration by deploying more satellites or adding GS is therefore equivalent to smoothing the pricing surface for urgency-sensitive buyers.

\section{Simulation Results} \label{sec:results}

The results are organized to follow the four contributions in order. We first establish the orbital geometry baseline that drives all subsequent findings, then decompose latency into actionable components, then show how discrete delivery produces pricing cliffs, and finally quantify the geographic premium and its robustness.

\subsection{Contact Window Frequency and Empirical Latency} The fundamental driver of all pricing results is the polar--equatorial asymmetry in contact window frequency. Fig.~\ref{fig:fig1} shows the 48-hour contact window timelines for the five modeled GS. Inuvik (68.4$^\circ$N) achieves a first-contact latency of 2.7\,min with the full 25-satellite constellation, while Libreville (0.4$^\circ$N) achieves 5.9\,min. This is not a marginal difference --- at $n=1$ satellite the same stations differ by $\sim$134\,min. The periodic coverage gaps visible as empty horizontal bands in Fig.~\ref{fig:fig1} are a consequence of the non-uniform RAAN distribution of the partially deployed constellation, and represent intervals during which the pricing results are conservative lower bounds on actual latency.

The same asymmetry applies to the terminal-to-satellite uplink: the Arctic Buoy (75$^\circ$N) receives 591 uplink windows over 48\,hours, approximately $3.5\times$ more than the equatorial Mid-Atlantic Buoy (168 windows). As shown next, this frequency difference propagates into latency through structurally different mechanisms depending on terminal latitude.

\begin{figure}[h!t]
    \centering
    \includegraphics[width=0.85\textwidth]{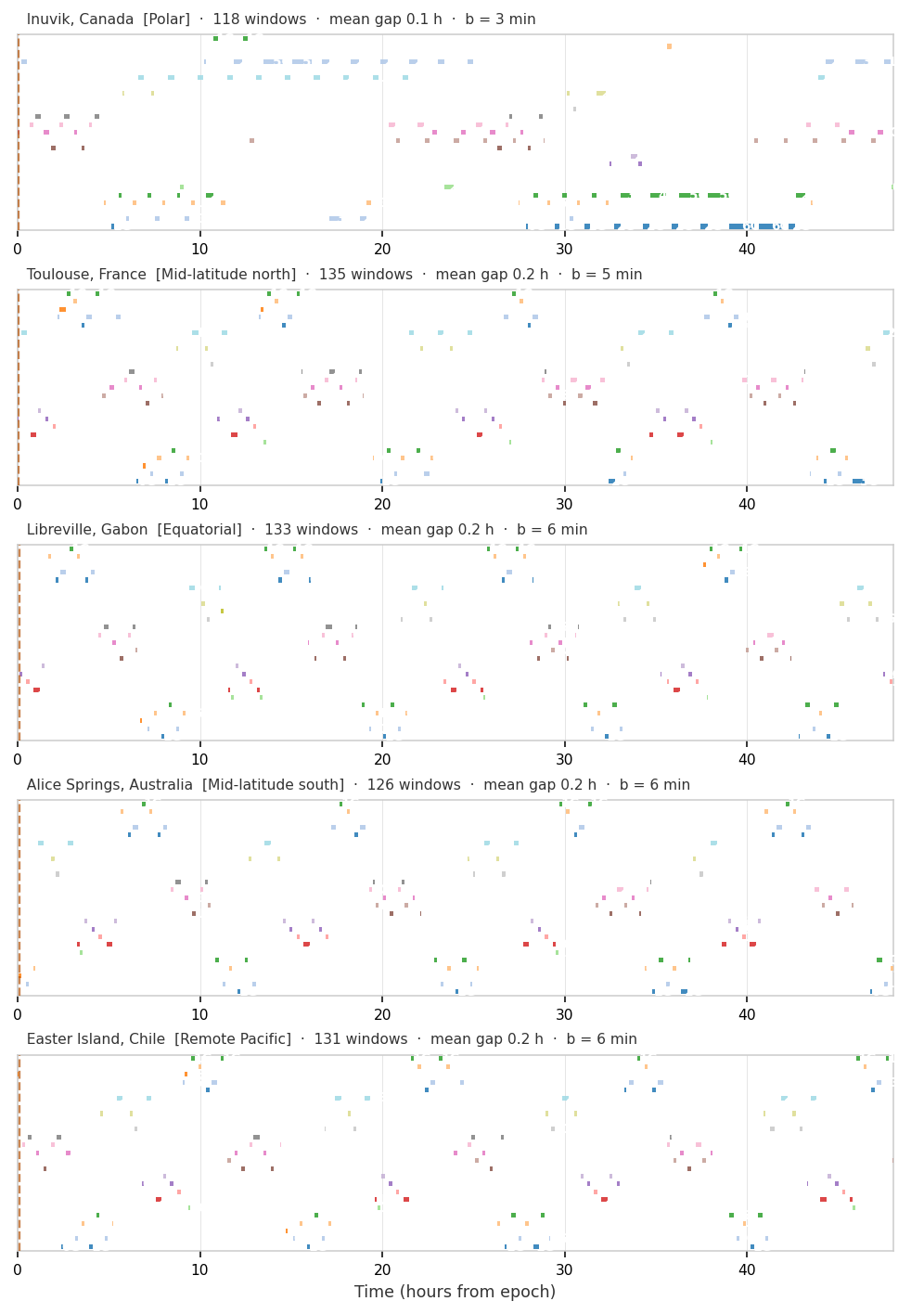}
    \caption{Contact window timelines for five Kin\'{e}is GS over 48 hours (25 satellites, 650\,km SSO, April 2026 TLE). Each row corresponds to one satellite; bar length encodes window duration; the dashed vertical line marks $b_1$. Polar stations receive far more passes than equatorial stations: Inuvik achieves $b_1 = 2.7$\,min versus 5.5--5.9\,min at equatorial stations. The clustered RAAN distribution of the partially deployed constellation produces periodic coverage gaps visible as empty horizontal bands.}
    \label{fig:fig1}
\end{figure}

\textbf{End-to-End Latency Decomposition:} The polar--equatorial asymmetry in contact frequency does not affect all terminals in the same way. Fig.~\ref{fig:fig8}a decomposes $b_{\text{total}}$ into its three components for the Mid class at $x_{\min} = 0.8$, revealing that polar and equatorial terminals are dominated by structurally different latency components --- the first contribution of this paper.

The downlink gap is negligible at all locations (of order seconds, due to the $\sim$3{,}000:1 uplink/downlink rate asymmetry), confirming that higher-capacity downlink infrastructure has no effect on achievable price in any regime and can be ruled out as an investment lever. Three structurally distinct regimes emerge. For the Arctic Buoy (75$^\circ$N), the uplink gap accounts for essentially all of $b_{\text{total}}$ ($\approx$0.56\,hr) while the store-and-forward gap is negligible ($<$0.01\,hr): dense polar satellite coverage means a GS is reached almost immediately after uplink. For equatorial terminals (Sahel, Amazon, Mid-Atlantic), the uplink gap again dominates ($\sim$2.7--3.2\,hr) because satellites pass infrequently; the store-and-forward gap adds at most 0.56\,hr. The Antarctic Buoy presents a third regime, and the only one in which the store-and-forward gap is the larger component: a short uplink gap ($\approx$0.56\,hr) due to polar SSO convergence, but a store-and-forward gap of $\approx$0.62\,hr because no Kin\'{e}is GS is located in the southern polar region.

This decomposition has a direct infrastructure implication. Both the Arctic Buoy and equatorial terminals are uplink-gap dominated, meaning a denser constellation is the primary lever for reducing their end-to-end latency. The Antarctic Buoy is the exception: its binding constraint is the store-and-forward gap, which requires southern polar GS or inter-satellite links to address.

Fig.~\ref{fig:fig8}b translates the latency decomposition directly into pricing consequences. Emergency buyers at equatorial terminals retain only 42--48\% of peak value at first actionable delivery, against 75\% at the Arctic Buoy and 69\% at the North Sea Platform --- a polar--equatorial spread of roughly one third of $V_0$. The Antarctic Buoy retains 66\%: its store-and-forward gap is structurally dominant, yet its priced cost is modest, because overlapping polar coverage means each transmission is heard by several satellites and travels on whichever reaches a northern GS first. The store-and-forward gap is therefore commercially damaging in proportion to how little routing redundancy is available, not simply to its duration. Delay-tolerant buyers (Agriculture) are largely unaffected across all locations, as expected from the small value of $k_{\text{tol}}$.

\textbf{Generalization Across Latitude:} The six terminals of Table~\ref{tab:iot_terminals} sample a continuous relationship rather than a set of special cases. Fig.~\ref{fig:fig10} sweeps terminal latitude from 75$^\circ$N to 70$^\circ$S in 5$^\circ$ steps, averaging over eight longitudes and the same 24 capture phases used throughout. Value retained is a smooth, near-symmetric function of latitude: for the Emergency buyer it falls to an equatorial trough of $P^*/V_0 = 0.435$ and rises to polar maxima of 0.740 at 75$^\circ$N and 0.679 at 70$^\circ$S. The latitude range, 0.305 of $V_0$, exceeds the residual longitude spread of 0.105 by roughly $3\times$, confirming that latitude is the dominant geographic variable and that the secondary longitude dependence --- which reflects proximity to specific GS meridians at $n=25$ --- does not obscure it. The six modeled terminals lie on this curve (Arctic 0.753 against 0.740 at the nearest grid point; Antarctic 0.664 against 0.679), indicating that they are representative of their latitudes rather than favorably chosen.

\begin{figure}[h!t]
    \centering
    \includegraphics[width=0.5\columnwidth]{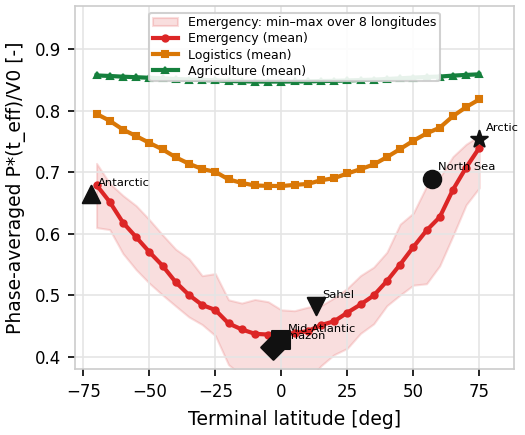}
    \caption{Value retained vs.\ terminal latitude, Mid class, $x_{\min}=0.8$, averaged over 24 capture phases and 8 longitudes per latitude. Shaded band: longitude min--max for the Emergency buyer. Black markers: the six modeled terminals of Table~\ref{tab:iot_terminals}, which fall on the swept curve. The latitude range (0.305 of $V_0$) is about $3\times$ the residual longitude spread (0.105).}
    \label{fig:fig10}
\end{figure}

\begin{figure}[h!t]
    \centering
    \includegraphics[width=\columnwidth]{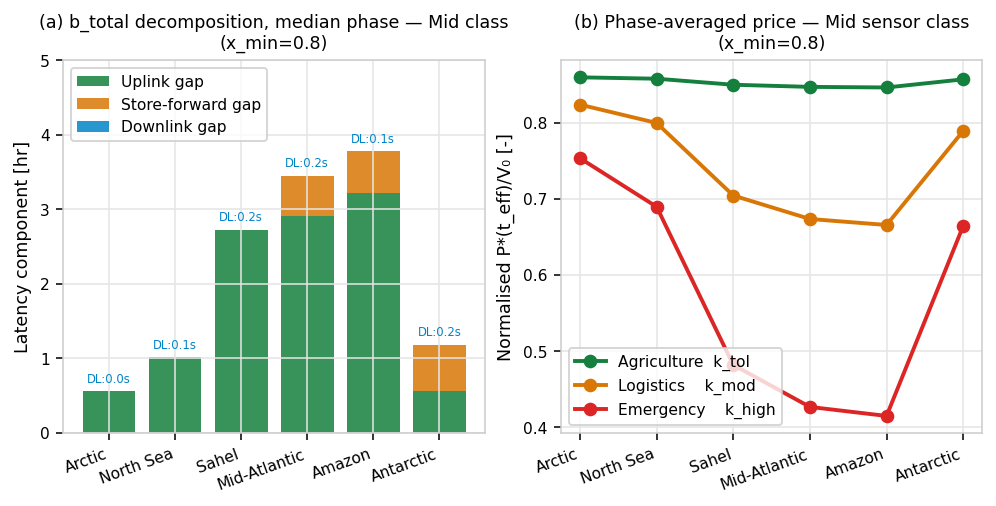}
    \caption{End-to-end latency decomposition, Mid class, $x_{\min}=0.8$. (a) Stacked bars at the median capture phase: uplink gap (green), store-and-forward gap (amber), downlink gap (blue). The downlink gap is annotated numerically (0.0--0.2\,s); it is sub-pixel against an hours-scale axis, which is itself the finding. Arctic Buoy and equatorial terminals are uplink-gap dominated; the Antarctic Buoy is the only store-and-forward dominated case, due to the absence of southern polar GS. (b) Normalized achievable price for three buyer urgency regimes, averaged over 24 capture phases. Equatorial Emergency buyers retain 42--48\% of peak value at first delivery, against 75\% at the Arctic Buoy; the Antarctic Buoy retains 66\% despite its dominant store-and-forward gap.}
    \label{fig:fig8}
\end{figure}

\subsection{Buffer Flow and Cumulative Delivery} Before pricing cliffs can be observed, it is necessary to establish how
data accumulates and clears across contact windows. Fig.~\ref{fig:fig6} shows cumulative delivery fraction over time for all six terminal locations and three pipeline-load classes, and reveals three structural features that set up the cliff analysis that follows.

First, Low-class terminals clear in a single contact window at every location --- completeness thresholds are commercially irrelevant for this class because data always arrives before value has decayed meaningfully. Second, polar terminals (Arctic Buoy, Antarctic Buoy) reach any $x_{\min}$ substantially earlier than equatorial terminals in the Mid class, a direct consequence of the higher uplink window frequency established above; the Arctic Buoy completes Mid-class delivery ($x_{\min}=1.0$) at a median 0.96\,hr versus 3.75--4.62\,hr for equatorial terminals.

Third, and less obviously, whether $x_{\min} = 1.0$ is reached \emph{at all} is itself a function of capture timing rather than a fixed property of the terminal. Across the 24 capture phases, Mid-class delivery completes in 22/24 phases at the Sahel and the Antarctic Buoy but only 13/24 at the Mid-Atlantic Buoy. In the High class the spread is starker: the Sahel and Amazon terminals clear in every phase (median $\approx$31\,hr), whereas the Mid-Atlantic Buoy clears in 10/24 and the Antarctic Buoy in only 4/24. With a 69:1 generation-to-uplink ratio the backlog grows faster than contact windows can drain it at these locations. This is a bandwidth constraint with a concrete commercial implication --- a contract specifying $x_{\min} = 1.0$ at High-class rates is not merely slow at the worst-served terminals but unenforceable in the majority of capture realizations, while at the Sahel and Amazon the same threshold is reliably met, only late.

\begin{figure}[h!t]
    \centering
    \includegraphics[width=\columnwidth]{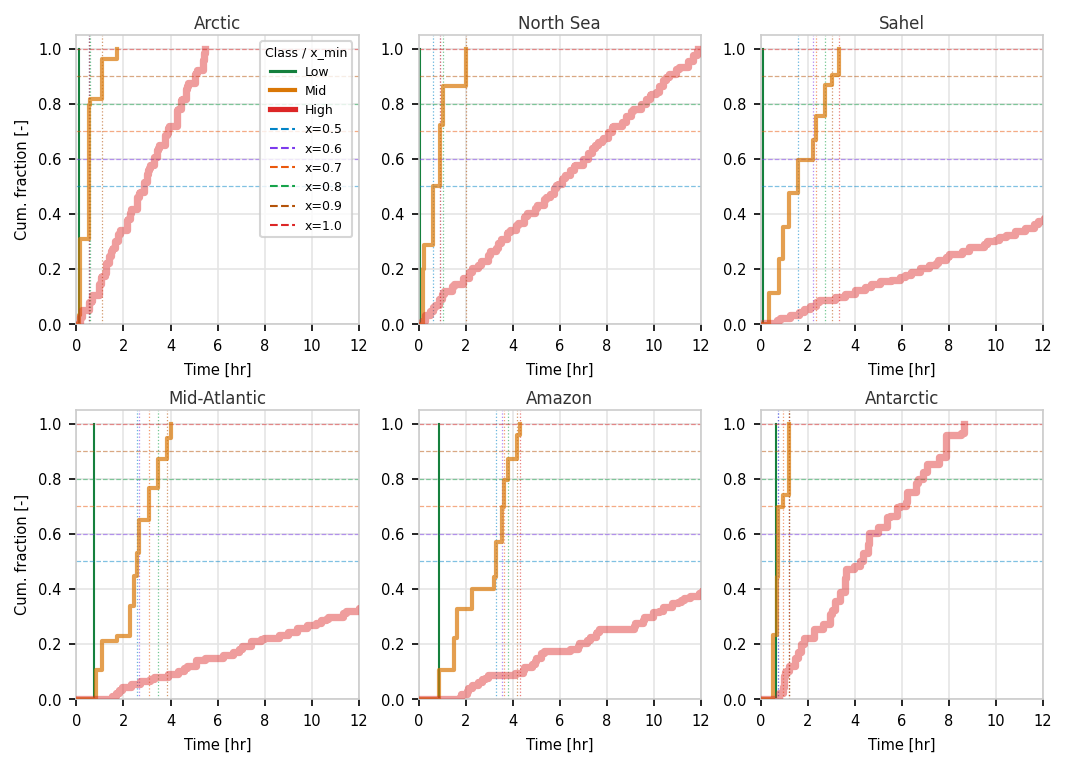}
    \caption{Cumulative delivery fraction vs.\ time for six terminal locations and three pipeline-load classes. Dashed lines: $x_{\min}$ thresholds; dotted lines: $t_{\text{effective}}(x_{\min})$ for Mid class, at the median capture phase. Under a 69:1 generation-to-uplink ratio, several High-class terminals reach $x_{\min} = 1.0$ only in a minority of capture realizations (Mid-Atlantic 10/24 phases, Antarctic 4/24).}
    \label{fig:fig6}
\end{figure}

\textbf{\texorpdfstring{$x_{\min}$ Sensitivity and Pricing Cliffs}{x min Sensitivity and Pricing Cliffs}:} Fig.~\ref{fig:fig7} illustrates the pricing cliff structure derived in section~\ref{sec:cliff}, using real orbital geometry from the deployed Kin\'{e}is constellation. It shows $P^*(t_{\text{effective}})/V_0$ as a function of $x_{\min}$ in a $3 \times 3$ panel (rows: Low/Mid/High load; columns: Agriculture/Logistics/Emergency).

The most important feature is the staircase structure visible in the Mid and High rows of the Emergency and Logistics columns: price drops discontinuously at contact window boundaries. There is no $x_{\min}$ value that produces a price between two adjacent steps. We note that this structure entirely invisible to continuous quality models.

Three features of the figure are consistent with the predictions of Eq.~\ref{eq:cliff_approx}. First, cliff depth scales with buyer urgency $k$: the Agriculture column shows negligible variation across $x_{\min}$ in the Low and Mid rows --- consistent with the corollary that cliff depth vanishes for $k \to 0$ --- while the Emergency column shows the sharpest drops and strongest geographic spread. In the High class, even Agriculture buyers at equatorial locations show non-negligible variation due to the extreme latency at these terminals. Second, cliff depth deepens with pipeline load class: High-class terminals span more windows, causing $x_{\min}$ to cross more boundaries, producing the pronounced staircase structure in the bottom row. Third, geographic spread between terminal locations is present across all load classes, including Low --- it reflects differences in first-window delivery time rather than cliff structure --- but the spread grows substantially in Mid and High rows as delivery spans multiple windows and the step structure amplifies latency differences.

\begin{figure}[h!t]
    \centering
    \includegraphics[width=\columnwidth]{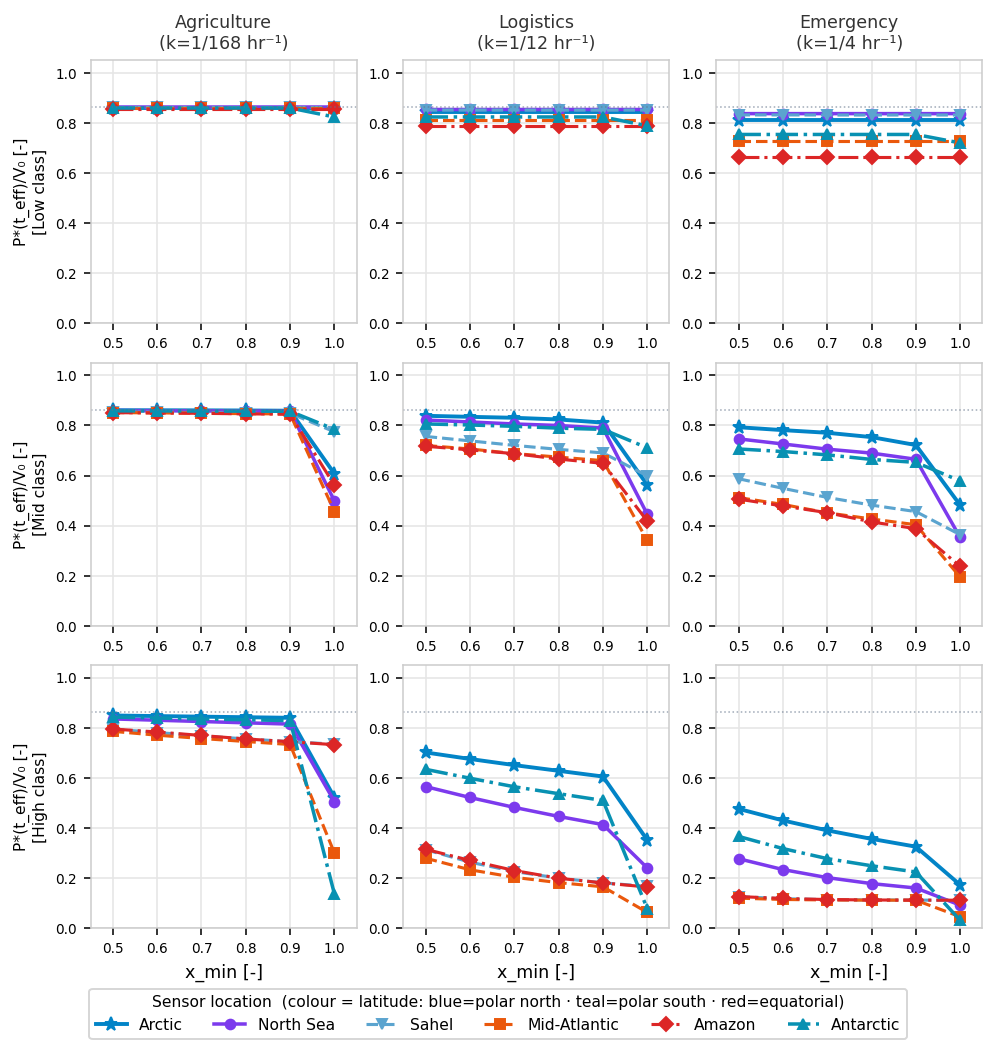}
    \caption{Normalized achievable price vs.\ $x_{\min}$. Rows: Low/Mid/High load. Columns: Agriculture/Logistics/Emergency. Lines: six terminal locations color-coded by latitude. Discrete price drops at window boundaries confirm the pricing cliff structure of Definition~1 and Proposition~1. Prices are averaged over 24 capture phases, so a threshold reached in only some phases is priced at its expected value rather than dropping out entirely.}
    \label{fig:fig7}
\end{figure}

\subsection{Constellation Size and Geographic Premium} Having established that contact window frequency determines latency and that latency determines pricing cliffs, we now quantify how the polar--equatorial asymmetry in contact frequency translates into a geographic pricing premium and how that premium depends on constellation size.

Fig.~\ref{fig:fig3}a and ~\ref{fig:fig3}b show $b_1$ and $P^*(b_1)/V_0$ versus $n=1$ to $25$ satellites. Going from 1 to 10 satellites recovers the majority of the latency benefit, while 10 to 25 yields modest additional gain. At $n=5$ (one satellite per plane), the non-uniform RAAN phasing of the deployed constellation produces a marked asymmetry --- Inuvik already reaches 14.3\,min while Libreville remains at 39.9\,min ($\approx 2.8\times$ gap) --- illustrating that the rate of convergence is not uniform across latitudes during partial deployment. At the full $n=25$ constellation, the Emergency buyer at Inuvik retains $P^*/V_0 = 0.854$, up from 0.696 at $n=1$; at Libreville the same buyer rises from 0.447 to 0.844 --- the largest absolute gain of any buyer type, reflecting how much value equatorial terminals lose at sparse constellations. Logistics buyers show a similar pattern at Libreville, rising from 0.686 to 0.856, confirming that the geographic asymmetry is commercially significant beyond the emergency use case. Agriculture buyers are insensitive to constellation size throughout: their values range only from 0.848 to 0.862 across the full sweep, because their 168-hour decision window dwarfs any first-contact latency in this range.

\begin{figure}[h!t]
    \centering
    \includegraphics[width=\columnwidth]{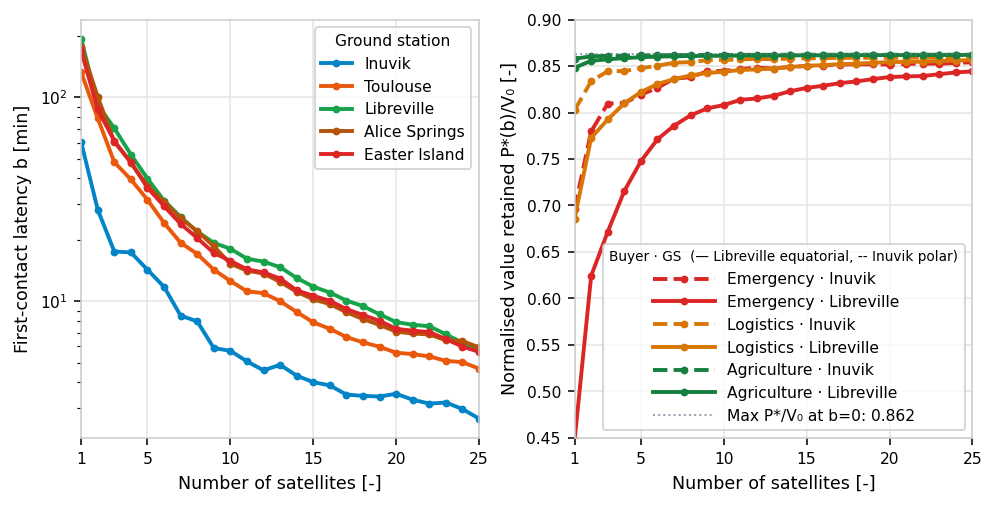}
    \caption{First-contact latency $b_1$ and normalized value retained vs.\ constellation size ($n=1$ to $25$). (a) $b_1$ [min] on a log scale for five GS: diminishing returns are pronounced beyond $n=10$. (b) $P^*(b_1)/V_0$ for three buyer urgency regimes at Inuvik (dashed) and Libreville (solid): Agriculture buyers are insensitive to constellation size across the full range.}
    \label{fig:fig3}
\end{figure}

Fig.~\ref{fig:fig4} places the constellation size effect in the context of the full price erosion curve, showing operating points for $n=1$, $5$, $25$. At $n=1$, Libreville and Inuvik are separated by $\approx$2.2\,hr and 0.25 in normalized price for the Emergency buyer --- an asymmetry large enough to require fundamentally different contract terms for terminals at different latitudes. At $n=25$, the separation collapses to 3.2\,min and 0.010. Note that Fig.~\ref{fig:fig4} helps to identify the constellation size at which geographic asymmetry becomes commercially negligible for each buyer urgency regime.

\begin{figure}[h!t]
    \centering
    \includegraphics[width=0.5\columnwidth]{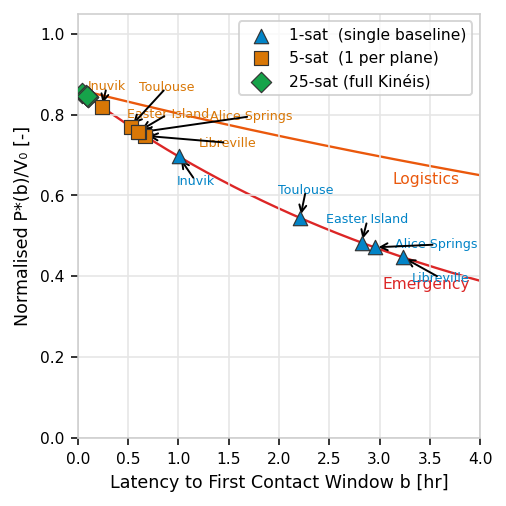}
    \caption{Normalized price erosion $P^*(b)/V_0$ vs.\ first-contact latency, with operating points for three constellation sizes (triangles: 1-sat; squares: 5-sat; diamonds: 25-sat). Agriculture curve omitted (near-flat). At $n=25$, all points cluster within 5\,min of $b=0$, confirming elimination of geographic pricing asymmetry.}
    \label{fig:fig4}
\end{figure}

To quantify the commercial significance of the polar--equatorial asymmetry at full constellation size, we compute the geographic premium analytically. Using $V_{0,\text{high}} = \$62.50$\,km$^{-2}$ (Pl\'{e}iades Neo 30\,cm priority tasking~\cite{apollomapping, landinfo}) to express the result in dollar terms:
\begin{equation}
\frac{P^*(b_{1,\text{in}}) - P^*(b_{1,\text{lb}})}{V_0} =
0.75 \left[e^{-0.0111} - e^{-0.0245}\right] = 0.0099
\label{eq_geo_premium}
\end{equation}
The premium is \$0.62/km$^2$ ($\approx$1.0\% of $V_0$), commercially negligible. At $n=25$ both $k \cdot b_1$ values are $\sim$0.01, placing all operating points in the linear regime of the exponential where the premium reduces to $\alpha \cdot k \cdot (b_{1,\rm lb} - b_{1,\rm in}) \approx 0.0101$, directly proportional to the latency difference --- a direct consequence of the short gaps the full constellation achieves. The contrast with $n=1$ shows how much has been achieved: the same calculation gives \$15.58/km$^2$ averaged over all 25 RAAN phases, a $25\times$ reduction. Fig.~\ref{fig:fig3}(a) shows the rate at which this premium decays with $n$, identifying the constellation size at which geographic pricing asymmetry becomes commercially negligible for each buyer regime.

\subsection{Parameter Sensitivity} \label{sec:results_sensitivity}
The geographic premium computed in the previous subsection used central parameter estimates ($\alpha = 0.75$, $\tau_{\text{high}} = 4$\,hr, $c_{\text{ratio}} = 0.15$). We now quantify its robustness to uncertainty in these estimates.

Table~\ref{tab:tau_sensitivity} sweeps $\tau_{\text{high}}$ across the full plausible range from the disaster-response literature. Even at the most pessimistic end ($\tau_{\text{high}} = 2$\,hr, the fastest-decaying buyer archetype), the premium reaches only \$1.21/km$^2$ --- still below 2\% of $V_0$. This confirms that the conclusion of the previous subsection --- that the full constellation eliminates commercially meaningful geographic asymmetry --- holds across all plausible buyer urgency values, not just the central estimate.

\begin{table}[ht]
\centering
\caption{Geographic premium sensitivity to $\tau_{\text{high}}$ (Inuvik vs Libreville, $n=25$, $V_0 = \$62.50$/km$^2$). Central estimate $\tau_{\text{high}} = 4$\,hr marked $^\dagger$.}
\label{tab:tau_sensitivity}
\begin{tabular}{rrrrrr}
\hline
$\tau$ [hr] & $k$ [hr$^{-1}$] & $P^*(b_{1,\rm in})/V_0$ &
$P^*(b_{1,\rm lb})/V_0$ & $\Delta P^*/V_0$ & \$/km$^2$ \\
\hline
2.0           & 0.500 & 0.8461 & 0.8267 & 0.0194 & \$1.21 \\
3.0           & 0.333 & 0.8515 & 0.8384 & 0.0131 & \$0.82 \\
4.0$^\dagger$ & 0.250 & 0.8542 & 0.8444 & 0.0099 & \$0.62 \\
6.0           & 0.167 & 0.8570 & 0.8504 & 0.0066 & \$0.41 \\
8.0           & 0.125 & 0.8584 & 0.8534 & 0.0050 & \$0.31 \\
12.0          & 0.083 & 0.8597 & 0.8564 & 0.0033 & \$0.21 \\
\hline
\end{tabular}
\end{table}

Figure~\ref{fig:fig9} extends this to all four model parameters ($\pm 20\%$ variation in $\alpha$, $\tau_{\text{high}}$, $c_{\text{ratio}}$, $b_1$). Two results are worth highlighting. First, the $c_{\text{ratio}}$ bar in the right panel is zero. The premium is therefore completely insensitive to the residual value assumption, which is the least empirically grounded parameter in our model. Second, the $b_1$ bar shows that first-contact latency uncertainty is a material driver of premium variability, comparable in magnitude to $\tau_{\text{high}}$. A $\pm 20\%$ error in the $b_1 = \bar{g}/2$ capture-timing assumption produces a $\pm 20\%$ swing in the premium --- but even at the high end the premium remains below 1.5\% of $V_0$.

A further source of uncertainty is the 5$^\circ$ minimum elevation mask used throughout, which produces theoretical contact window boundaries. Chai et al.~\cite{Chai2025} find that effective window durations are 73--89\% shorter than theoretical predictions due to packet losses at low elevation angles, motivating a 20$^\circ$ mask as an empirically grounded upper bound. Table~\ref{tab:el_sensitivity} shows that raising the mask to 20$^\circ$ increases first-contact latency to 4.9\,min at Inuvik and 13.4\,min at Libreville, and raises the geographic premium to \$1.60/km$^2$ (2.6\% of $V_0$). Even at this upper bound the premium remains commercially small, and the qualitative conclusion that the full Kin\'{e}is constellation eliminates meaningful geographic pricing asymmetry is unchanged.
\begin{figure}[h!t]
    \centering
    \includegraphics[width=\columnwidth]{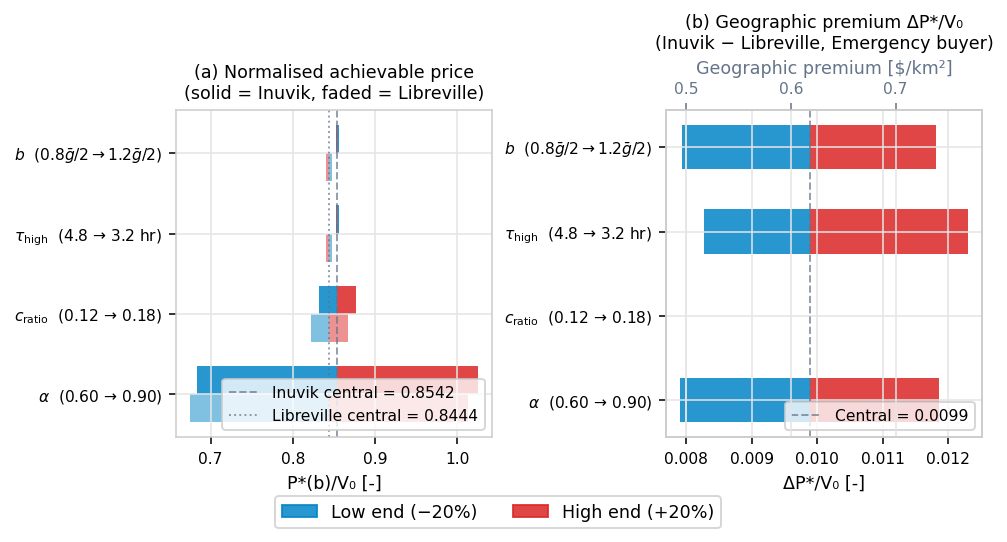}
    \caption{Sensitivity plot ($\pm 20\%$ per parameter; blue = low, red = high). Left: $P^*(b_1)/V_0$ at Inuvik (solid) and Libreville (faded), Emergency buyer, $n=25$. Right: geographic premium $\Delta P^*/V_0$ with secondary \$/km$^2$ axis. The $c_{\text{ratio}}$ bar is exactly zero; the $b_1$ bar confirms first-contact latency uncertainty is a material premium driver.}
    \label{fig:fig9}
\end{figure}

\section{Conclusion}

We presented an integrated pricing model for IoT data delivered via LEO satellite store-and-forward pipelines. By combining a full SGP4 orbital simulation of the deployed Kin\'{e}is constellation with a buffer flow model and a value decay framework calibrated to observed satellite imagery market prices, we derived four results: a latency decomposition showing that polar and equatorial terminals require fundamentally different infrastructure interventions; a formal characterization of pricing cliffs and the conditions under which they are commercially significant; an analytical geographic pricing premium and its dependence on constellation size, showing $25\times$ reduction from $n=1$ to $n=25$ satellites with diminishing returns beyond $n=10$; and a sensitivity analysis showing that the premium is exactly independent of the residual value assumption and remains below 1.5\%
of $V_0$ across the full plausible range of buyer urgency. A natural direction for future work is empirical calibration of the $x_{\min}$ completeness threshold and the value capture coefficient $\alpha$ from observed IoT data transaction prices, which would convert the model from an analytical framework into a deployable pricing tool.

\section{Limitations and Future Work} \label{sec:limitations}

The model carries several limitations: (1) the exponential decay form is not empirically validated for IoT markets; (2) $\tau$ values are estimates from adjacent literatures and $\alpha$ lacks empirical grounding in IoT transaction data; (3) the April 2026 TLE snapshot reflects a partially deployed constellation and becomes stale within 7--14 days; (4) the RAAN gap creates periodic routing asymmetries not captured by the orbital simulation~\cite{Caini2021}; (5) the independent terminal assumption ignores collision losses at the radio access layer, which would shift pricing cliff positions leftward~\cite{Fraire2022}; (6) $x_{\min}$ and $\epsilon$ lack empirical grounding per buyer archetype; (7) all $b_i$ timestamps are best-case lower bounds, as routing suboptimality shifts actual delivery rightward~\cite{Caini2021, Fraire2021}; (8) only 5 of 20 Kin\'{e}is stations are modeled; and (9) the broadcast reception model credits a chunk to every satellite in view at the instant transmission completes, which is optimistic for long transmissions spanning a pass boundary.

A methodological note follows from (7): because a single capture instant is a lottery on the pass schedule, per-terminal prices must be computed in expectation over capture phase rather than from one realization. Reporting a single draw overstates value at densely covered polar terminals by up to $0.12$ of $V_0$ for the Emergency archetype.

Potential future extensions of this work include: an empirical calibration of $x_{\min}$, $\epsilon$, and $\alpha$ from observed transaction prices; a multi-terminal buffer model grounding the cliff analysis in realistic uplink conditions; and richer demand-side models incorporating strategic buyers similar to~\cite{He2023}.

\bibliographystyle{unsrtnat}
\bibliography{references}

\end{document}